\documentclass[12pt]{article}

\usepackage{amssymb, amsmath, amsthm}
\usepackage{graphicx}
\usepackage{cite}

\newcommand{\td}{\text{d}}

\usepackage{hyperref}

\def\be{\begin{equation}}
\def\ee{\end{equation}}
\def\bea{\begin{eqnarray}}
\def\eea{\end{eqnarray}}

\newtheorem{theorem}{Theorem}
\newtheorem{lemma}{Lemma}
\theoremstyle{definition}

\newtheorem{remark}{Remark}

\usepackage[left=2cm,right=2cm,top=2cm,bottom=2cm]{geometry}


\title{\bf No static bubbling spacetimes in higher dimensional Einstein-Maxwell theory}
\author{Hari K. Kunduri$^a$\footnote{hkkunduri@mun.ca } \  and James Lucietti$^b$\footnote{j.lucietti@ed.ac.uk } \\ \\
\small \sl $^a$ Department of Mathematics and Statistics, \\  \small \sl McMaster University,\footnote{On sabbatical leave from Department of Mathematics and Statistics, Memorial University of Newfoundland
St. JohnÕs, NL  A1C 5S7, Canada} \\ \small \sl Hamilton, ON, L8S 4K1, Canada
\\ \small \sl $^b$  School of Mathematics and Maxwell Institute of Mathematical Sciences, \\ \small \sl University of Edinburgh, \\ \small \sl   King's Buildings, Edinburgh, EH9 3FD, UK }

\date{}

\begin{document}
\maketitle
\begin{abstract} We prove that any asymptotically flat static spacetime in higher dimensional Einstein-Maxwell theory must have no magnetic field. This implies that there are no static soliton spacetimes and completes the classification of static non-extremal black holes in this theory. In particular, these results establish that there are no asymptotically flat  static spacetimes with non-trivial topology, with or without a black hole, in Einstein-Maxwell theory.
\end{abstract}

\vspace{.5cm}

A striking result in Einstein-Maxwell theory is the absence of soliton solutions, i.e., nontrivial, asymptotically flat, stationary, globally regular spacetimes \cite{Lich, Carter1972,Breitenlohner:1987dg}. Therefore, an isolated, self-gravitating, equilibrium state with positive energy must contain a black hole, a phenomenon which has been coined `no solitons without horizons' \cite{Gibbons:1990um, Gibbons:1997cc}. 

This result does not extend to higher dimensions. There are numerous examples of asymptotically flat, stationary spacetimes without a black hole, see e.g.~\cite{Bena:2007kg, Gibbons:2013tqa}. In these examples, the spacetime contains non-trivial 2-cycles, or `bubbles',  supported by magnetic flux. Indeed in Einstein-Maxwell theory, the assumption of trivial topology is enough to rule out the existence of solitons \cite{Shiromizu:2012hb}.   The existence of these  `bubbling' spacetimes is closely tied to the failure of black hole uniqueness in five dimensions. Asymptotically flat bubbling spacetimes containing black holes have been constructed~\cite{Kunduri:2014iga}. Interestingly, this leads to a continuous violation of uniqueness in the class of spherical topology black holes and raises a puzzle for the string theory derivation of black hole entropy~\cite{Horowitz:2017fyg}. 

The known bubbling spacetimes are solutions to Einstein-Maxwell theory coupled to a Chern-Simons term for the gauge field (supergravity).   It is tempting to attribute the existence of soliton spacetimes to the Chern-Simons terms. Indeed, in pure Einstein-Maxwell theory it is easy to see that the electric charge of a soliton $Q\sim \int_{S_\infty^{n-2}} \star F = 0 $ regardless of topology (by Stokes' theorem and the Maxwell equation). However, this does not imply the mass vanishes. 

The presence of 2-cycles in the exterior region lead to additional terms in the Smarr mass formula~\cite{Gibbons:2013tqa, Kunduri:2013vka} (see also~\cite{Haas:2014spa}) and the first law of black hole mechanics \cite{Kunduri:2013vka}. It is straightforward to generalise the mass formula~\cite{Kunduri:2013vka} to $n$-dimensional Einstein-Maxwell theory, yielding the following expression for the ADM mass for a soliton
\be\label{ADM}
M_{\text{ADM}} = \frac{1}{4\pi(n-3)} \int_\Sigma\Theta \wedge F  \; ,
\ee
where $\Sigma$ is a Cauchy surface,  $\xi$ is the stationary Killing field and $\Theta=i_\xi \star F \in H^{n-3}(\Sigma)$ encodes the magnetic field. This raises the possibility of positive energy regular spacetimes with non-trivial topology supported by a magnetic field. The known bubbling solutions also carry angular momentum. This also raises the question of whether angular momentum in necessary or is magnetic flux sufficient to support topology. 

The purpose of this note is to address the above questions. In particular,  we will answer the question: do static bubbling spacetimes exist (with or without a black hole) in Einstein-Maxwell theory?  In fact, Gibbons, Ida, and Shiromizu have previously considered the classification of static spacetimes in this theory~\cite{Gibbons:2002bh, Gibbons:2002av, Gibbons:2002ju}. Under the assumption that there are no magnetic fields, they proved that an asymptotically flat, static spacetime containing a non-extremal black hole must be given by a Reissner-Nordstr\"om solution.  It is easy to see that their proof also excludes soliton spacetimes in this class. However, as discussed above bubbling spacetimes must be supported by magnetic flux.  Therefore, to answer the above question requires us to revisit the classification of static spacetimes and include the possibility of a magnetic field.

Our main results are the following:

\begin{theorem} \label{thm1}Any non-trivial, $n\geq 5$ dimensional, asymptotically flat, static solution of the Einstein-Maxwell equations must contain a black hole. 
\end{theorem}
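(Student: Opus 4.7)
My plan is to deduce Theorem 1 from two inputs. The first is the Gibbons-Ida-Shiromizu classification of static, asymptotically flat Einstein-Maxwell spacetimes with purely electric Maxwell field, which asserts that any such spacetime is either Reissner-Nordstr\"om or Minkowski. The second is the technical result announced in the abstract: in any static asymptotically flat solution with $n\geq 5$ the magnetic field must vanish. Granting both, the theorem is immediate: a non-trivial static asymptotically flat solution has purely electric $F$, so must be Reissner-Nordstr\"om (which contains a horizon) or Minkowski (which is trivial), leaving only the black hole case. So the real content is the no-magnetic-field statement.

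For that statement, I would work on the static slice $\Sigma$ with metric $h$ and lapse $V$, decomposing $F = -dt\wedge d\Phi + B$ into an electric potential 1-form $d\Phi$ (using $d(i_\xi F)=0$, simple connectedness at infinity, and $\Phi\to 0$ asymptotically) and a closed magnetic 2-form $B$ on $\Sigma$. In these variables the ADM mass formula (\ref{ADM}) collapses to an integral of (a multiple of) $V|B|_h^2\,dV_h$, because $\Theta$ reduces on $\Sigma$ to an appropriate Hodge dual of $B$. I would then derive a second, independent expression for $M_{\text{ADM}}$ by a Komar-type computation: applying Stokes' theorem to $\star d\xi$ and using the Einstein equations, one rewrites $M_{\text{ADM}}$ as a bulk integral of Maxwell energies on $\Sigma$ with dimension-dependent coefficients. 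The Maxwell equation $\delta(V^{-1}d\Phi)=0$ together with $\Phi\to 0$ at infinity and the vanishing net electric charge of a soliton allows the electric contribution to be eliminated by integration by parts. The hope is that the resulting combination collapses to
\[
\int_\Sigma f_n(V)\, |B|_h^2\, dV_h \;=\; 0
\]
with $f_n(V) > 0$ for $n\geq 5$, which forces $B\equiv 0$.

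The main obstacle, I expect, is precisely the one the paper is designed to overcome: the 2-form $B$ is not globally exact whenever $\Sigma$ has non-trivial $H^2$ (the very condition for bubbles to exist), so no global potential for $B$ is available and integration-by-parts against a potential is not an option. All identities must be genuinely invariant expressions in $|B|^2$ or in Hodge duals. A related difficulty is that in higher dimensions the magnetic and electric stress-energy contributions enter Einstein's equation with distinct, $n$-dependent weights and fail to be exchanged by duality, so the Bunting--Masood-ul-Alam conformal-rescaling strategy that underlies the GIS proof in the purely electric case does not transport in any obvious way to the magnetic setting. The delicate step is therefore matching the Komar and Smarr expressions with just the right choice of integration-by-parts identities so that electric terms cancel and magnetic terms combine with a definite sign.
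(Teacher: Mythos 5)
Your overall architecture --- reduce Theorem 1 to the statement $B\equiv 0$, then fall back on the purely electric classification --- matches the paper's (once $B=0$, the mass formula \eqref{ADMstatic} gives $M_{\text{ADM}}=0$ for a soliton and the positive mass theorem yields Minkowski), and your elimination of the electric contribution by integrating $\nabla^i(V^{-1}\nabla_i\psi)=0$ against $\psi$ is fine. The gap is in the mechanism you propose for killing $B$. The Komar computation and the Smarr-type formula \eqref{ADMstatic} are not two independent expressions for the mass: the latter \emph{is} the bulk rewriting of the Komar integral via the Einstein equations, with the electric piece removed exactly as you describe. This route therefore produces the single identity $M_{\text{ADM}}=\frac{1}{8\pi(n-3)}\int_\Sigma V|B|^2\,\td\text{vol}(g)$, which is perfectly consistent with $B\neq 0$ and $M_{\text{ADM}}>0$; the paper stresses precisely this point (``this does not imply the mass vanishes'') as the reason the existence of static magnetic solitons is a genuine question. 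No matching of Komar against Smarr can yield $\int_\Sigma f_n(V)|B|^2=0$ with $f_n>0$, so the central step of your proof has no content as stated; some genuinely global input is required.

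That input is the rigidity part of the positive mass theorem applied to a conformal rescaling --- i.e.\ exactly the Bunting--Masood-ul-Alam strategy you set aside as not transporting to the magnetic setting. The paper's key computation is that with $\Omega_\pm=\bigl[\bigl(\tfrac{1\pm V}{2}\bigr)^2-\tfrac{C^2\psi^2}{4}\bigr]^{1/(n-3)}$ (shown positive via a Hopf-maximum-principle lemma giving $V\pm C\psi<1$ when $M>|Q|$), the scalar curvature of $g^\pm=\Omega_\pm^2 g$ is a sum of manifestly non-negative terms, the coefficient of $|B|^2$ being non-negative precisely when $n\geq 5$; this is where the dimension restriction enters, and it is the step your proposal lacks. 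Since $(\Sigma^+,g^+)$ is complete, asymptotically flat with zero mass and $R^+\geq 0$, rigidity forces $R^+\equiv 0$ and hence $B\equiv 0$. Your argument also omits the borderline case: the separate divergence identity for $F_\pm=V\pm C\psi-1$ is needed to establish $M\geq |Q|$ with equality only if $B=0$, so that the conformal factors above are well defined in the remaining case $M>|Q|$.
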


\begin{theorem} \label{thm2} Any $n\geq 5$ dimensional, asymptotically flat, static solution of the Einstein-Maxwell equations containing a non-extremal black hole must be a Reisser-Nordstr\"om solution.
\end{theorem}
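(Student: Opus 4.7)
The plan is to reduce Theorem~\ref{thm2} to the purely electric case already solved by Gibbons, Ida and Shiromizu~\cite{Gibbons:2002bh, Gibbons:2002av, Gibbons:2002ju}. Concretely, I would first prove the auxiliary statement that, in any asymptotically flat static solution of Einstein-Maxwell in $n\ge 5$ dimensions, the magnetic part of $F$ must vanish throughout the domain of outer communication. Once the Maxwell field is known to be purely electric, Theorem~\ref{thm2} follows immediately from the Gibbons-Ida-Shiromizu classification, which identifies Reissner-Nordstr\"om as the unique asymptotically flat, static, non-extremal black hole with a purely electric field.

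To attack the auxiliary statement, let $\xi$ be the static Killing field with $V^2=-g(\xi,\xi)$ and $\Sigma$ a spacelike Cauchy surface orthogonal to $\xi$ with quotient metric $h$. Non-extremality furnishes a smooth bifurcation surface $H$ at which $V=0$, which provides the inner boundary of $\Sigma$ and allows for smooth doubling across $H$. Decompose $F=V^{-2}\xi^\flat\wedge d\Phi+B$, where $B$ is the magnetic two-form with $i_\xi B=0$; then $\Theta=i_\xi\star F$ is a closed $(n-3)$-form on $\Sigma$ encoding the magnetic flux. Using the Komar integral for $\xi$ together with $d\Theta=0$, one derives a Smarr-type identity of the schematic form
\be
M_{\text{ADM}} = \alpha\,\kappa A_H+\beta\,\Phi_H Q+\frac{1}{4\pi(n-3)}\int_\Sigma \Theta\wedge F,
\ee
with dimension-dependent constants $\alpha,\beta$, generalising \eqref{ADM} to the black-hole case. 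The task is then to show that the final integral, and ultimately $B$ itself, must vanish. My preferred route is a Bunting-Masood-ul-Alam conformal-doubling argument on the orbit space: choose conformal factors $\Omega_\pm=\Omega_\pm(V,\Phi)$, rescale $\tilde h_\pm=\Omega_\pm^{2/(n-3)} h$, and glue two copies of $(\Sigma,\tilde h_\pm)$ across $H$ to form a complete, asymptotically flat Riemannian manifold of vanishing ADM mass, and then invoke the rigidity case of the positive mass theorem.

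The principal obstacle is sign control of the magnetic contribution to the rescaled scalar curvature: the two-form $B$ enters $\tilde R$ with a sign that is not \emph{a priori} definite, and unlike in four dimensions there is no duality rotating $B$ into a scalar on the orbit space that could be absorbed via target-space symmetries of the reduced $\sigma$-model. Two strategies seem plausible. First, attempt a modified conformal factor in which the magnetic terms in $\tilde R$ are manifestly non-negative, extending the Shiromizu-Tomizawa analysis~\cite{Shiromizu:2012hb} beyond the trivial-topology setting. Failing that, combine the Smarr identity above with a positive-mass inequality $M_{\text{ADM}}\le\alpha\,\kappa A_H+\beta\,\Phi_H Q$ derived from a conformal metric adapted to the full electromagnetic field; this forces $\int_\Sigma\Theta\wedge F=0$, and pointwise vanishing of $B$ then follows from closedness of $\Theta$, the Maxwell equations on $\Sigma$, and a unique continuation argument anchored at infinity. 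Either way, once $B\equiv 0$ has been established, the Gibbons-Ida-Shiromizu theorem closes out the proof.
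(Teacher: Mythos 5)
Your overall strategy --- first prove $B\equiv 0$, then fall back on the purely electric Gibbons--Ida--Shiromizu classification --- is exactly the paper's, and you have correctly located the crux: the sign of the magnetic contribution to the conformally rescaled scalar curvature. But at precisely that point the proposal stops, offering two ``plausible strategies'' without executing either, and that is where essentially all of the mathematical content of the theorem lives. The paper's resolution is that \emph{no modification of the conformal factor is needed}: with the standard factors $\Omega_\pm=\bigl[\bigl(\tfrac{1\pm V}{2}\bigr)^2-\tfrac{C^2\psi^2}{4}\bigr]^{1/(n-3)}$, depending on $V$ and the electric potential only, a direct computation gives
\be
\Omega_\pm^2 R^\pm = \frac{|B|^2}{\Omega_\pm^{n-3}}\left[\Omega_\pm^{n-3} \mp \frac{V(1\pm V)}{n-3}\right] + (\text{manifestly non-negative}) \; ,
\ee
and the bracket is strictly positive for both signs when $n\geq 5$: for $R^-$ this follows from $F_\pm=V\pm C\psi-1<0$ (itself a Hopf maximum principle argument, Lemma~\ref{lemma2}, which uses non-degeneracy of the horizon to control the normal derivative of $F_\pm$ at $\partial\Sigma$), while for $R^+$ one uses $\Omega_+^{n-3}-\tfrac{V(1+V)}{n-3}=\Omega_-^{n-3}+\tfrac{V}{n-3}\bigl[\,n-5+(1-V)\,\bigr]>0$. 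This is exactly where the hypothesis $n\geq 5$ enters and why the statement fails in four dimensions; without this computation (or a substitute) the argument does not close. You also omit the preliminary step $M\geq|Q|$ with the borderline case $M=|Q|$ disposed of separately (the paper's Lemma 1), which is needed for $\Omega_-$ to be positive and for $\Sigma^-$ to compactify.

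Your second fallback would not work as described. The inequality $M_{\rm ADM}\le\alpha\,\kappa A_H+\beta\,\Phi_H Q$ is a Penrose-type mass bound whose proof in the presence of magnetic flux is not available off the shelf; establishing it would be at least as hard as the theorem itself. Moreover, were it available, the conclusion would be immediate and would need neither closedness of $\Theta$ nor unique continuation: in the static gauge the bulk term is $\int_\Sigma\Theta\wedge F\propto\int_\Sigma V|B|^2\,\td\mathrm{vol}(g)$, cf.~\eqref{ADMstatic}, whose integrand is pointwise non-negative, so its vanishing forces $B\equiv 0$ directly. Finally, note that the gluing is not a doubling of a single metric across the bifurcation surface but a pasting of the two \emph{different} rescalings $(\Sigma^\pm,g^\pm)$, with $\Sigma^-$ compactified by a point at infinity; the match is $C^1$ because $\Omega_+=\Omega_-$ at $V=0$ and both boundaries are umbilic with equal extrinsic curvature, which again relies on the non-degeneracy boundary conditions.
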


Therefore, we find that static soliton spacetimes do not exist and that the domain of outer communication must be topologically trivial. In other words, static bubbling spacetimes do not exist in Einstein-Maxwell theory.
These results will be proved using the ingenious method originally developed by Bunting and Masood-ul-Alam for four dimensional static spacetimes \cite{static1, static2}, later generalised to include degenerate horizons~\cite{Chrusciel:1998rw} and to higher dimensions~\cite{Gibbons:2002bh, Gibbons:2002av, Gibbons:2002ju}.  
In fact,  the purely magnetic case of our theorem could be deduced from the results of~\cite{Emparan:2010ni} by dualising their electric $p$-form $H_p$ for $p=n-2$. \footnote{There have been earlier attempts at classifying solutions with both electric and magnetic fields of an $(n-2)$-form field strength~\cite{Rogatko:2004iz, Rogatko:2006gg}.} In our analysis we will make no assumptions on the Maxwell field and allow for both electric and magnetic fields. We will now briefly sketch the proof of the above results.

The equations of motion for $n$-dimensional Einstein-Maxwell theory are
\bea\label{FE}
&&\tilde{R}_{\mu\nu} = 2 \left( F_{\mu \rho} F_{\nu}^{~~ \rho} - \frac{1}{2(n-2)} F^2 \tilde{g}_{\mu\nu}  \right) \\
&& \td \tilde{\star} F=0 \label{Max}
\eea 
where $\tilde{R}_{\mu\nu}$ is the curvature of the spacetime $(M, \tilde{g})$ and $F= \td A$ for some {\it locally} defined potential $A$. For any static spacetime we can introduce coordinates so that
\be
\tilde{g} = -V(x)^2 \td t^2 + g_{ij}(x) \td x^i \td x^j  \label{static}
\ee
where $\xi = \partial /\partial t$ is the static Killing field, $g$ is a Riemannian metric on a hypersurface $\Sigma$  of constant $t$ and $V$ is a smooth positive function.  If $\xi$ is strictly timelike in the spacetime then $\Sigma$ is a complete manifold.  If $\xi$ is null anywhere the above coordinate system breaks down at the level set $V=0$. In this case, we extend $\Sigma$ to a manifold with a smooth boundary $\partial \Sigma= \{ V=0 \} $, which could correspond to a event horizon or an ergosurface if the hypersurface is null or timelike respectively. We will assume the non-degeneracy condition $\kappa^2 \equiv (\td V)^2|_{\partial \Sigma} \neq 0$, otherwise our analysis is valid regardless of the nature of the boundary $\partial \Sigma$ (in the case of an event horizon $\kappa$ is of course the surface gravity). In fact, it has been shown that for asymptotically flat spacetimes, the static Killing field must be strictly timelike in the domain of outer communications, i.e. $\xi$ can only become null on an event horizon and there are no ergosurfaces~\cite{Chrusciel:2008rh}.  In this work, we only need to invoke the results of~\cite{Chrusciel:2008rh}  to rule out the possibility of \emph{degenerate} ergosurfaces  (i.e. ones with $\kappa = 0$ as defined above).  Stationary, non-static spacetimes containing such surfaces, also known as `evanescent' ergosurfaces, have been discussed in~\cite{Gibbons:2013tqa, Niehoff:2016gbi, Eperon:2016cdd}. 

We assume that the spacetime is asymptotically flat and that the domain of outer communication is globally hyperbolic. Topological censorship \cite{Friedman:1993ty,Chrusciel:1994tr, Galloway1995} then implies $\Sigma$ is simply connected.
We also assume that the static isometry extends to a symmetry of the Maxwell field.  This implies $\td i_\xi F=0$ and therefore we deduce the existence of a globally defined electric potential $\psi$ so that $i_\xi F = -\td\psi$.  Thus we can write the Maxwell field as
\be
F = \td \psi \wedge \td t + B
\ee
where $i_\xi B=0$, so $B \in H^2(\Sigma)$ encodes the magnetic field.  The field equations \eqref{FE} reduce to the following equations on $\Sigma$,
\bea
&&R_{ij} = \frac{\nabla_i \nabla_j V}{V} - \frac{2 \nabla_i \psi \nabla_j \psi}{V^2} + \frac{2 | \nabla \psi |^2 }{(n-2) V^2} g_{ij}  + 2 \left( B_{i k} B_{j}^{~k} - \frac{1}{2(n-2)} | B |^2  g_{ij} \right) \label{Ricg} \\
&&\nabla^2 V = C^2 \frac{| \nabla \psi |^2 }{V}  + \frac{1}{n-2}  V | B |^2  \; ,   \label{Vpsieq} \qquad \nabla^i  ( V^{-1} \nabla_i \psi)=0, \\
&& \nabla^i (V  B_{ij})=0, \qquad B_{i}^{~j}\nabla_j \psi = 0  \label{Beq} 
\eea
where $R_{ij}, \nabla_i$ and $| \cdot |$ are the Ricci tensor, the metric connection and norm defined by $g_{ij}$, and $C^2 = 2\left(\frac{n-3}{n-2} \right)$. 

The behaviour of the fields near an inner boundary $\partial \Sigma$ may be determined as follows. Given our assumptions we may choose $V$ itself as a coordinate and write
\begin{equation}
g = \rho^2 \td V^2 + h_{\alpha \beta} \td x^\alpha \td x^\beta
\end{equation}
where $x^\alpha$ are coordinates on the $(n-2)$-dimensional level sets of $V$ and $\rho= | \td V |^{-1/2}$ so  $\rho|_{\partial \Sigma} = \kappa^{-1}$. Without using the field equations the spacetime invariant 
\bea
\tilde{R}_{\mu\nu\rho\sigma}\tilde{R}^{\mu\nu\rho\sigma} &=& \frac{4}{V^2} (\nabla_ i \nabla_j V) (\nabla^i  \nabla^j V) + R_{ijkl}R^{ijkl} \nonumber \\
&=& \frac{4}{V^2} \left( \frac{(\partial_V \rho)^2}{\rho^6} + \frac{2 D^\alpha \rho D_\alpha \rho}{\rho^4} +\frac{K_{\alpha \beta} K^{\alpha\beta}}{\rho^2} \right) +  R_{ijkl}R^{ijkl}
\eea
where $D_\alpha$ is the metric connection of $h_{\alpha\beta}$ and $K_{\alpha\beta} = \tfrac{1}{2} \rho^{-1} \partial_V h_{\alpha \beta}$ is the extrinsic curvature of the level sets of $V$. This is a sum of squares and hence requiring smoothness of this invariant as $V \to 0$  implies 
\be
D_\alpha \rho = O(V), \qquad K_{\alpha \beta} = O(V), \qquad \partial_V \rho= O(V)  \label{BC1}
\ee
and hence their limiting values at $V\to 0$ all vanish.  To deduce behaviour of $\psi$ we can use 
\be
\nabla^2 V =  \frac{1}{\rho} K - \frac{\partial_V \rho}{\rho^3}
\ee
where $K = h^{\alpha \beta} K_{\alpha \beta}$ 
and using the field equations \eqref{Vpsieq} it follows that 
\be
D_{\alpha} \psi = O(V), \qquad \partial_V \psi  = O(V)  \label{BC2}
\ee
and hence their limiting values also vanish.  Note that the first condition in \eqref{BC1} and \eqref{BC2} imply that $\kappa$ and the electric potential are constant on $\partial \Sigma$.

We will consider asymptotically flat solutions. This means  in particular that $\Sigma \setminus K$, where $K$ is a compact set, is diffeomorphic to $\mathbb{R}^{n-1} \setminus B(R)$ where $B(R)$ is a ball of radius $R>0$. Furthermore, we require the following asymptotic decay conditions
\bea
g_{ij} &=& \left( 1+ \frac{2 M}{n-3} \frac{1}{r^{n-3}} \right) \delta_{ij} + O(r^{-(n-2)}) \\
V &=& 1- \frac{M}{r^{n-3}} +O(r^{-(n-2)}) \\
\psi &=& \frac{Q}{C r^{n-3}} +O(r^{-(n-2)}) 
\eea
for $r= \sqrt{x^ix^i}>R$, 
where $M$ and $Q$ are constants proportional to the mass $M_{\text{ADM}}$ and electric charge respectively. 
The magnetic field $B$ will also have fall-off conditions, although we will not need this in what follows. In the absence of a black hole, the mass of the spacetime \eqref{ADM} becomes
\begin{equation}
M_{\text{ADM}} = \frac{1}{8\pi(n-3)} \int_\Sigma V \, |B|^2 \; \td \text{vol}(g)   \label{ADMstatic}
\end{equation}
Thus non-trivial soliton spacetimes exist if and only if $B \neq 0$.

We first derive an important consequence of the asymptotic conditions. 
\begin{lemma} For any asymptotically flat, static solution to Einstein-Maxwell theory 
\begin{equation}
M \geq |Q|
\end{equation}
with equality occurring iff $B=0$ 
and $\pm C \psi=1-V$ everywhere on $\Sigma$.
\end{lemma}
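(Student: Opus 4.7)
The plan is to adapt the higher-dimensional Bunting--Masood-ul-Alam (BMA) conformal technique to the Einstein--Maxwell setting with both electric and magnetic fields turned on. The starting observation is an algebraic simplification of the scalar curvature of $(\Sigma,g)$: tracing \eqref{Ricg} and substituting \eqref{Vpsieq} to eliminate $\nabla^{2}V$ gives, after the terms involving $|\nabla\psi|^2/V^2$ combine cleanly, the identity
\[
R \;=\; \frac{2|\nabla\psi|^{2}}{V^{2}}+|B|^{2}\,,
\]
so $(\Sigma,g)$ has non-negative scalar curvature---the essential input for the positive mass theorem.

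Motivated by the expected rigidity locus $\pm C\psi=1-V$, I would introduce the two conformally rescaled Riemannian metrics $g_{\pm}=\Omega_{\pm}^{2/(n-3)}g$ on $\Sigma$ with
\[
\Omega_{\pm}\;=\;\frac{(1\pm V)^{2}-C^{2}\psi^{2}}{4}\,,
\]
so that $\Omega_{-}=0$ precisely on the Majumdar--Papapetrou-like locus. The main technical step---and the principal obstacle---is to verify, by direct calculation using the reduced field equations \eqref{Ricg}--\eqref{Beq}, that the scalar curvature of each $g_{\pm}$ is a manifest sum of squares in $B$ and in an appropriate combination of $\nabla V$ and $\nabla\psi$ that vanishes on the rigidity locus. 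This is the higher-dimensional Einstein--Maxwell generalisation of the conformal identities underlying \cite{Gibbons:2002av, Emparan:2010ni}; the choice of $\Omega_{\pm}$ is engineered so the sum of squares degenerates exactly when $(1-V)^2=C^2\psi^2$ and $B=0$.

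With $R_{g_{\pm}}\geq 0$ in hand, the closing manoeuvre is standard BMA. The boundary expansions \eqref{BC1}--\eqref{BC2}, together with the common value $\Omega_{+}|_{\partial\Sigma}=\Omega_{-}|_{\partial\Sigma}=(1-C^{2}\psi_{H}^{2})/4$, allow one to double $(\Sigma,g_{-})$ across $\partial\Sigma$ and glue it smoothly to $(\Sigma,g_{+})$; the rapid decay $\Omega_{-}=(M^{2}-Q^{2})/(4r^{2(n-3)})+O(r^{-(2n-5)})$ at infinity conformally compactifies the second asymptotic end to a single regular point, yielding a complete, asymptotically flat Riemannian manifold $\hat{\Sigma}$ with non-negative scalar curvature. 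For this gluing and compactification to produce a genuine Riemannian structure one needs $\Omega_{-}\geq 0$ throughout $\Sigma$, and in particular in the asymptotic region, which forces $M^{2}\geq Q^{2}$; combined with $M\geq 0$ (which follows from applying the positive mass theorem to $\hat{\Sigma}$, whose single end coincides with that of $(\Sigma,g)$) this delivers $M\geq|Q|$. When equality holds, $\Omega_{-}\equiv 0$, giving $(1-V)^{2}=C^{2}\psi^{2}$ and hence $\pm C\psi=1-V$ on $\Sigma$ (with the sign determined by $\mathrm{sgn}(Q)$), while the vanishing of the sum-of-squares expression for $R_{g_{\pm}}$ forces $B=0$, as required. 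The soliton case $\partial\Sigma=\emptyset$ is simpler: $Q=0$ by Stokes' theorem as already noted in the introduction, and the positive mass theorem applied directly to the complete manifold $(\Sigma,g)$ gives $M\geq 0$, with rigidity forcing $g$ flat and hence $\psi\equiv 0$, $V\equiv 1$ and $B=0$.
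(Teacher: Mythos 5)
Your opening identity $R = 2|\nabla\psi|^2/V^2 + |B|^2$ is correct, and your treatment of the soliton sub-case ($\partial\Sigma=\emptyset$, $Q=0$, positive mass theorem applied directly to $(\Sigma,g)$) goes through. The black-hole case, however, is circular, and that is where the content of the lemma lies. The conformal doubling construction you propose is precisely the machinery the paper deploys \emph{later}, for Theorems \ref{thm1} and \ref{thm2}, and it can only be set up \emph{after} $M>|Q|$ is known: you need $\Omega_-=\tfrac14\left[(1-V)^2-C^2\psi^2\right]>0$ for $g_-$ to be a Riemannian metric at all, and you need $1-V\geq C|\psi|$ for the coefficient of $|B|^2$ in the scalar curvature \eqref{scalarcurvature} of $g_-$ to be non-negative. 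Both facts are obtained in the paper from the maximum-principle Lemma \ref{lemma2}, whose hypothesis is $M>|Q|$. Your sentence that the gluing ``forces $M^2\geq Q^2$'' inverts the logic: the failure of a chosen construction when $M<|Q|$ proves nothing about the spacetime. There is a secondary error as well: the single asymptotic end of $\hat\Sigma$ carries the metric $g_+=\Omega_+^{2/(n-3)}g$, whose ADM mass is zero by construction (the $1/r^{n-3}$ terms in $\Omega_+^{2/(n-3)}$ and $g$ cancel), not $M$; so the positive mass theorem applied to $\hat\Sigma$ yields rigidity ($\hat\Sigma$ flat), not the inequality $M\geq 0$ that you invoke.

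The lemma in fact has a much softer proof that bypasses all of this. Set $F_\pm=V\pm C\psi-1$; the field equations \eqref{Vpsieq} give the divergence identity $\nabla^i(V\nabla_i F_\pm)=|\nabla F_\pm|^2+\tfrac{1}{n-2}V^2|B|^2\geq 0$. Integrating over $\Sigma$, the inner boundary term vanishes because $V=0$ there, while the flux at infinity equals $(n-3)\,\mathrm{Vol}(S^{n-2})\,(M\mp Q)$ by \eqref{Fpmasymp}; hence $M\mp Q\geq 0$, with equality iff $B=0$ and $F_+$ or $F_-$ is constant, hence zero by the asymptotics. No conformal rescaling, no positive mass theorem, and no a priori sign information on $F_\pm$ or $\Omega_\pm$ is required. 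To salvage your approach you would first have to establish $\Omega_\pm>0$ and $1-V\geq C|\psi|$ by independent means, which is essentially as hard as the lemma itself.
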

\begin{proof} It is convenient to define the smooth functions
\be
F_\pm = V \pm C \psi -1
\ee
The field equations \eqref{Vpsieq} imply that
\be
\nabla^i ( V \nabla_i F_{\pm}) = | \nabla F_{\pm}|^2 + \frac{V^2 | B |^2 }{n-2}
\ee
Integrating this over $\Sigma$ and using the asymptotic conditions, which imply
\be
F_\pm = - \frac{M\mp Q}{r^{n-3}}+ O(r^{-(n-2)})  \; ,  \label{Fpmasymp}
\ee
together with the inner boundary conditions if necessary, yields
\be
M \mp Q = \frac{1}{(n-3)\text{Vol}(S^{n-2})} \int_\Sigma \left[ |\nabla F_\pm |^2+ \frac{1}{n-2} V^2 | B |^2  \right]   \; \td \text{vol}(g)  \geq 0 \; ,
\ee
with equality occurring if and only if $B_{ij}=0$ and either $F_+$ or $F_-$ is a constant (which must vanish by the asymptotic conditions).
\end{proof}

\begin{remark} If $M= |Q|$ the conformal scaling $\hat{g} = \Omega^2 g$ with $\Omega= V^{1/(n-3)}$ implies $\hat{R}_{ij}=0$, $\hat{\nabla}^2 V^{-1}=0$ and that $\hat{g}$ is asymptotically flat with zero mass.  If $(\Sigma, \hat{g})$ is complete then by the positive mass theorem~\cite{Witten:1981mf, SchoenYau} we deduce that $(\Sigma, \hat{g})$ is Euclidean space. In this case the spacetime is a Majumdar-Papapetrou solution and it is natural to expect the general solution in this class must correspond to the multi-centred extreme black hole solutions, although we emphasise this is an open problem.  For dimension $n=4$, this has been proved and requires detailed use of the near-horizon geometry~\cite{Chrusciel:2006pc}. In higher dimensions, more general static near-horizon geometries are possible even with no magnetic field~\cite{Kunduri:2009ud}, which may complicate the classification.  If $\Sigma$ has an inner boundary, the classification remains open (although for $n=4$ this case can be ruled out).
\end{remark}

Henceforth we will  assume $M>|Q|$. In particular,  we will generalize the uniqueness proof of \cite{Gibbons:2002ju} to include the presence of magnetic fields.
\begin{lemma}\label{lemma2} $F_\pm< 0$ on $\Sigma$  if $M>|Q|$.\footnote{For the case of no magnetic field this was implicitly assumed in~\cite{Gibbons:2002ju} (see also~\cite{Rogatko2003}).}
\end{lemma}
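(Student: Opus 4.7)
My plan is to recast the claim as a linear-elliptic subsolution statement for $F_\pm$ and then invoke the strong maximum principle together with the Hopf boundary-point lemma.

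First, I would derive a subsolution inequality. Starting from the identity
\begin{equation}
\nabla^i(V \nabla_i F_\pm) = |\nabla F_\pm|^2 + \frac{V^2 |B|^2}{n-2}
\end{equation}
already used in Lemma 1, I substitute $\nabla V = \nabla F_\pm \mp C \nabla\psi$ on the left-hand side to cancel the quadratic gradient term, obtaining
\begin{equation}
V \nabla^2 F_\pm \mp C \nabla\psi \cdot \nabla F_\pm = \frac{V^2 |B|^2}{n-2} \geq 0.
\end{equation}
Dividing by $V > 0$ on the interior of $\Sigma$, $F_\pm$ is a subsolution of the linear operator $L_\pm := \nabla^2 \mp (C/V)\nabla\psi \cdot \nabla$. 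The boundary conditions \eqref{BC2} force $\nabla\psi/V$ to extend continuously up to $\partial\Sigma$, so $L_\pm$ has bounded first-order coefficient, no zero-order term, and is uniformly elliptic on all of $\bar\Sigma$.

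Second, I would combine the asymptotic expansion $F_\pm = -(M\mp Q) r^{-(n-3)} + O(r^{-(n-2)})$ with the hypothesis $M>|Q|$ to conclude that $0$ is the asymptotic limit of $F_\pm$ at infinity, approached strictly from below. Exhausting $\Sigma$ by large geodesic balls and applying the weak maximum principle for $L_\pm$ on each gives $\sup_\Sigma F_\pm \leq \max\{ 0, \sup_{\partial\Sigma} F_\pm \}$. The strong maximum principle excludes interior attainment of this supremum, since an interior maximum would force $F_\pm$ to be constant, and hence identically $0$, which by the asymptotic expansion requires $M = \pm Q$ and contradicts $M > |Q|$. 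Thus the only potential failure mode is a non-negative supremum attained on an inner boundary point.

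Third and crucially, I would use Hopf's boundary-point lemma to exclude attainment on $\partial\Sigma$. Suppose $F_\pm$ attained its supremum at some $p \in \partial\Sigma$. In the coordinates $g = \rho^2 dV^2 + h_{\alpha\beta} dx^\alpha dx^\beta$ with $\rho|_{\partial\Sigma} = \kappa^{-1}$, the outward unit normal to $\Sigma$ at $\partial\Sigma$ is $\nu = -\kappa\partial_V$, so $\partial_\nu F_\pm|_p = -\kappa(1 \pm C\partial_V\psi)|_{V=0} = -\kappa$, using $\partial_V\psi \to 0$ at $V=0$ from \eqref{BC2}. Because $\kappa > 0$ this outward normal derivative is strictly negative, contradicting Hopf's requirement of a strictly positive outward normal derivative at a boundary maximum of a non-constant subsolution. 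Hence the supremum is attained only at infinity and equals $0$, and the strong maximum principle then delivers $F_\pm < 0$ throughout $\Sigma$; in the soliton case this third step is vacuous. The main technical concern is justifying the regularity of $L_\pm$ up to $\partial\Sigma$ needed to apply Hopf's lemma, which reduces to the continuity of $\nabla\psi/V$ at the horizon and follows from the non-degeneracy assumption $\kappa \neq 0$ together with the boundary expansions \eqref{BC1}--\eqref{BC2}.
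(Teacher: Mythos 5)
Your proposal is correct and follows essentially the same route as the paper: the same subsolution inequality $\nabla^2 F_\pm \mp (C/V)\,\nabla\psi\cdot\nabla F_\pm = V|B|^2/(n-2) \geq 0$, the strong maximum principle to exclude an interior maximum, and the explicit computation of the normal derivative $n^i\partial_i F_\pm|_{\partial\Sigma} = \kappa > 0$ (equivalently your outward derivative $-\kappa < 0$) via \eqref{BC2} to exclude a boundary maximum. Your added care about the boundedness of the drift coefficient $\nabla\psi/V$ up to $\partial\Sigma$ is a worthwhile refinement of a point the paper leaves implicit, but the argument is the same.
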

\begin{proof}
The field equations  \eqref{Vpsieq} imply
\be
\nabla^2 F_\pm \mp \frac{C \nabla \psi \cdot \nabla F_\pm }{V}  = \frac{1}{n-2} V | B |^2  \geq 0
\ee
Thus if $M>|Q|$  equation (\ref{Fpmasymp}) implies the functions $F_\pm<0$ in the asymptotic end. By the Hopf maximum principle $F_\pm$ cannot attain a maximum in the interior of $\Sigma$. Therefore if $\Sigma$ is complete we must have $F_\pm <0$ everywhere.  If $\Sigma$ has an inner boundary then $n^i \partial_i F_\pm|_{\partial \Sigma}=\kappa>0$, where $n =\kappa^{-1}\td V$ is the unit inward  normal to $\partial \Sigma$ and we have used \eqref{BC2}. Hence $F_\pm$ cannot attain a maximum on the inner boundary and therefore we must again have $F_\pm <0$ everywhere on $\Sigma$. 
\end{proof}

The above lemma  allows us to introduce the conformally related Riemannian manifolds $(\Sigma^\pm,g^\pm)$ with ${g}^{\pm} = \Omega_\pm^2 g$ where 
\be
\Omega_\pm \equiv \left[ \left( \frac{1\pm V}{2} \right)^2 - \frac{C^2 \psi^2}{4} \right]^{1/(n-3)} \;.
\ee
Indeed, positivity of $\Omega_\pm$ follows from the identities $\Omega_+^{n-3}- \Omega_-^{n-3} =  V$ and  $\Omega^{n-3}_- = \tfrac{1}{4} F_+ F_-$. It is easy to check that $(\Sigma^+, g^+)$ is asymptotically flat and has zero ADM mass, and that if $M> |Q|$ the metric $g^-$ extends to the compact manifold $\Sigma^- \cup \{ p \}$ where $p$ is the point at infinity~\cite{Gibbons:2002ju}. A  tedious calculation yields an expression for the scalar curvature of $g^\pm$,
\be \label{scalarcurvature}
\Omega_\pm^2 R^\pm = \frac{| B |^2 }{\Omega_\pm^{n-3}}\left[\Omega_\pm^{n-3}  \mp \frac{ V (1\pm V) }{ (n-3)} \right]+ \frac{1}{ 8 V^2 \Omega_\pm^{2(n-3)}} \left| 2 V \psi \nabla V - (V^2-1 +C^2 \psi^2) \nabla \psi \right|^2
\ee
where we have used \eqref{Ricg} and (\ref{Vpsieq}).
\begin{lemma}  For $n \geq 5$,  $\Sigma$ is conformally flat and
 \be 
B = 0 , \qquad 2 V \psi \nabla V  = (V^2-1 +C^2 \psi^2) \nabla \psi   \label{flat}
\ee
hold everywhere on $\Sigma$. 
\end{lemma}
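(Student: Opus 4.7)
The plan is to apply the positive mass theorem to the conformally rescaled Riemannian manifold $(\Sigma^+, g^+)$ with $g^+ = \Omega_+^2 g$ (doubled across the horizon when one is present) to force it to be isometric to Euclidean space, so that flatness of $g^+$ gives conformal flatness of $g$ and the two non-negative contributions in \eqref{scalarcurvature} must vanish separately. Expanding $\Omega_+^{n-3}=[(1+V)^2-C^2\psi^2]/4$ using the decay of $V$ and $\psi$ gives, for $n\geq 5$, $\Omega_+ = 1 - M/[(n-3)r^{n-3}] + O(r^{-(n-2)})$, so that the conformal factor exactly cancels the mass term in $g$ and $g^+$ is asymptotically flat with zero ADM mass.

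The crucial input is non-negativity of $R^+$. The second term in \eqref{scalarcurvature} is a manifest sum of squares; for the first, I will invoke Lemma \ref{lemma2}, which gives $F_\pm < 0$, equivalently $|C\psi| < 1-V$. This forces both $V < 1$ and $C^2\psi^2 < (1-V)^2$ everywhere on $\Sigma$, and hence
\[
\Omega_+^{n-3} = \frac{(1+V)^2 - C^2\psi^2}{4} > \frac{(1+V)^2 - (1-V)^2}{4} = V \geq \frac{V(1+V)}{n-3},
\]
where the last inequality uses $1+V<2\leq n-3$ for $n\geq 5$. Both terms in \eqref{scalarcurvature} are then non-negative, with $R^+ = 0$ if and only if $B = 0$ and the vector in the square vanishes---i.e.\ precisely the conclusions \eqref{flat}.

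It remains to apply the positive mass theorem. If $\Sigma$ has no horizon then $(\Sigma^+,g^+)$ is already complete, and the theorem~\cite{Witten:1981mf,SchoenYau} directly forces it to be Euclidean. When a horizon is present I will double the manifold across $\{V=0\}$ by gluing $(\Sigma^+,g^+)$ to $(\Sigma^-,g^-)$, noting that $(\Sigma^-\cup\{p\},g^-)$ is compact. The analogous formula for $R^-$ has bracket $\Omega_-^{n-3} + V(1-V)/(n-3)$, which is manifestly non-negative, so $R^- \geq 0$ automatically. Using the near-horizon expansions \eqref{BC1}--\eqref{BC2} and constancy of $\kappa$ and $\psi$ on $\partial\Sigma$, I will verify that $g^+$ and $g^-$ match along $\{V=0\}$ with oppositely signed second fundamental forms, giving a $C^{1,1}$ glued metric with distributionally non-negative scalar curvature. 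The positive mass theorem then forces the doubled manifold to be Euclidean and hence $R^+\equiv 0$, which through \eqref{scalarcurvature} delivers \eqref{flat} together with conformal flatness of $g$. The main obstacle I anticipate is the rigorous regularity check at the gluing surface---verifying that the extrinsic curvatures match up to sign in the two conformal metrics, so that the distributional form of the positive mass theorem applies---but this generalises the four-dimensional Bunting--Masood-ul-Alam step once the boundary behaviour \eqref{BC1}--\eqref{BC2} is in hand.
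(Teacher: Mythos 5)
Your proposal is correct and follows essentially the same route as the paper: non-negativity of $R^\pm$ via Lemma \ref{lemma2} (your chain $\Omega_+^{n-3}>V\geq V(1+V)/(n-3)$ for $n\geq 5$ is algebraically equivalent to the paper's identity $\Omega_+^{n-3}-V(1+V)/(n-3)=\Omega_-^{n-3}+\tfrac{V}{n-3}[n-5+(1-V)]$), followed by the positive mass theorem applied to $(\Sigma^+,g^+)$ in the complete case or to the glued manifold $\Sigma^+\cup\Sigma^-\cup\{p\}$ when an inner boundary is present, with the extrinsic curvatures matching because $\partial\Sigma$ is totally geodesic in $g$ and $\Omega_+=\Omega_-$ there. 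The strict positivity of the coefficient of $|B|^2$, which you establish, is exactly what forces $B=0$ once $R^+\equiv 0$, and the remaining square gives the second condition in \eqref{flat}.
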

\begin{proof} Consider the scalar curvature expression \eqref{scalarcurvature}. Only the term proportional to $| B |^2 $ is not manifestly non-negative. 
 The $| B |^2 $ term in $R^-$ is non-negative, and vanishes only if $B=0$, since by Lemma \ref{lemma2} we have $1-V> \pm C \psi$ and hence $1-V > C|\psi| \geq 0$. If $n \geq 5$, the $| B |^2 $ term in $R^+$ is also non-negative, and vanishes only if $B=0$, because
\be
\Omega_+^{n-3}- \frac{V(1+V)}{n-3} = \Omega_-^{n-3} + \frac{V}{n-3} [ n-5+ (1-V)] >0
\ee
Therefore for $n \geq 5$ the conformally rescaled metric has non-negative scalar curvature even when a magnetic field is included.\footnote{This is not true for $n=4$. Indeed, one can have non-zero magnetic fields in four dimensions.}  Thus, if $\Sigma$ is complete,  $(\Sigma^+, g^+)$ is an asymptotically flat complete Riemannian manifold with non-negative scalar curvature and zero mass, so by the positive mass theorem \cite{SchoenYau} it must be $\mathbb{R}^{n-1}$ with $g^+_{ij} = \delta_{ij}$.  

On the other hand, if $\Sigma$ has an inner boundary, we may form a complete manifold $\hat{\Sigma} = \Sigma^+ \cup \Sigma^- \cup \{ p \}$  equipped with metric $\hat{g}$ by pasting $(\Sigma^\pm, g^\pm)$ along the boundaries $\partial \Sigma^\pm = \{ V=0 \}$  and adding the point $p$ at infinity as above. Indeed, since $\Omega_+=\Omega_-$ at $V=0$ the metric is continuous, and the extrinsic curvatures of $\partial\Sigma^\pm$ in $(\Sigma^\pm, g^\pm)$ with respect to the unit inward normal,
\be
K^\pm_{\alpha\beta}= \frac{\kappa}{2(n-3)} \Omega_\pm^{2-n} {h}^\pm_{\alpha\beta}  \; ,  \label{umbilic}
\ee
also match continuously, where  ${h}^\pm_{\alpha\beta}$ is the induced metric and we have used that $\partial\Sigma$ is totally geodesic (\ref{BC1}).  This, together with the fact that $(\hat{\Sigma}, \hat{g})$ is asymptotically flat with zero mass and has non-negative scalar curvature,  is sufficient to invoke the positive mass theorem and conclude $(\hat{\Sigma}, \hat{g}) = (\mathbb{R}^{n-1},\delta)$~\cite{static1, Chrusciel:1998rw, SchoenYau}.   Thus, in either case the scalar curvature vanishes, which implies the conditions (\ref{flat}).
\end{proof}  

\begin{remark} The latter condition in \eqref{flat} implies that level surfaces of $V$ coincide with those of $\psi$.  In fact, it may be directly integrated 
\be 
V^2 = 1 +C^2 \psi^2 - \frac{2M C}{Q} \psi
\ee
where we have fixed the integration constant using the asymptotics.
\end{remark}

We have therefore shown that $B \equiv 0$, i.e. the magnetic field must vanish after all. The analysis then reduces to that of \cite{Gibbons:2002ju}, so we will be brief.  We can write
\be
g_{ij} = \Omega_+^{-2} \delta_{ij}= (v_+ v_-)^{2/(n-3)} \delta_{ij}
\ee
where 
\be
v_\pm = \frac{2}{2 + F_\pm} \; .
\ee A tedious calculation using (\ref{Vpsieq}) and (\ref{flat}) reveals that $v_\pm$ are harmonic functions on $(\Sigma^+, \delta)$.  Lemma 2 implies
that $v_{\pm}>1$ on $\Sigma^+$ and the asymptotic conditions imply
\be
v_\pm = 1+ \frac{M\mp Q}{2 r^{n-3}} +  O(r^{-(n-2)})   \label{asymptv}
\ee
Thus the functions $v_{\pm}$ are bounded on $\Sigma^+$. Therefore, if $\Sigma$ is a complete manifold, we see that  $v_\pm$ are bounded harmonic functions on $\Sigma^+ \cong \mathbb{R}^{n-1}$. Hence they must be constants which coincide with their asymptotic value $v_\pm=1$. It follows that $V=1$ and $\psi=0$ so the solution is just Minkowski spacetime.

If there is an inner boundary, (\ref{umbilic}) shows that the embedding of $\partial \Sigma^+$ must be totally umbilical in $(\mathbb{R}^{n-1}, \delta)$. We conclude that each connected component of $\partial \Sigma^+$ must be a geometric sphere. The possibility of $\partial \Sigma^+$ having multiple connected components may be excluded using an argument given in~\cite{Chrusciel:1998rw} (note this considerably simplifies the proof in~\cite{Gibbons:2002bh, Gibbons:2002av, Gibbons:2002ju}). 
If $\partial \Sigma^+$ has multiple connected components, then $\hat{\Sigma} \setminus \overline{\Sigma}^+\cong \Sigma^{-} \cup \{p\}$ would be a disjoint union of balls in $\mathbb{R}^{n-1}$.  In particular, $\Sigma^-$ could not be connected, in contradiction with $\Sigma^-\cong \Sigma$.  Thus $\partial \Sigma^+$ must have a single connected component which we may identify with the surface $r=r_0$ in $\hat{\Sigma} \cong \mathbb{R}^{n-1}$. Then we have a boundary value problem on ${\Sigma}^+ = \mathbb{R}^{n-1} \setminus B(r_0)$ for the harmonic function $v_\pm$ with the boundary conditions (\ref{asymptv}) and $v_\pm|_{r=r_0}$ is a constant (since it is on a level set of $V$). The unique solution to this is
\be
v_\pm = 1+  \frac{M\mp Q}{2 r^{n-3}} 
\ee
which determines the data $(V,\psi)$. The solution corresponds to the exterior region of the Reissner-Nordstr\"om black hole. This completes the proof of Theorem \ref{thm1} and \ref{thm2}.

\begin{remark}
If there is no black hole the vanishing of the magnetic field and the mass formula (\ref{ADMstatic}) immediately implies $M_{\text{ADM}}=0$ and hence by the positive mass theorem \cite{SchoenYau} the spacetime is Minkowski.  This gives a quick proof of Theorem \ref{thm1}.  
\end{remark}

\begin{remark}
Note that we can also rule out the possibility of any degenerate components of the horizon in the $M > |Q|$ case under consideration.  Any degenerate components of the horizon must correspond to an asymptotic cylindrical end of $\Sigma$ (they lie an infinite proper distance away from any point in $\Sigma$, see e.g. \cite{Chrusciel:1998rw, Khuri:2017fun}).  Therefore, the above gluing procedure can still be carried out to obtain a complete manifold $\hat\Sigma$, now with additional asymptotic ends. An extension of the positive mass theorem may be invoked in this setting~\cite{BartnikChrusciel,Chrusciel:1998rw} (at least for spin manifolds) to conclude again that $\hat\Sigma \cong \mathbb{R}^{n-1}$, which in particular rules out the existence of any such additional ends. 
\end{remark}

It is interesting to consider how these results generalise to Einstein-Maxwell theory coupled to a Chern-Simons term. In particular, for $n=5$ this includes the important case of minimal supergravity. In this case, the only change is  the Maxwell equation (\ref{Max}) acquires a term proportional to $F \wedge F$ (we do not need the precise coefficient).  This leads to the equation for $\psi$ in (\ref{Vpsieq}) acquiring a term $\star_g ( B \wedge B)$, whereas the equation for $V$ is unchanged. In general this will spoil the positivity arguments required for the above results. However, in the purely magnetic case $\psi \equiv 0$, the Maxwell equation implies $B \wedge B=0$ and it is easy to see that the above arguments remain valid. We thus deduce that Theorem \ref{thm1} and \ref{thm2} remain valid for $n=5$ Einstein-Maxwell theory coupled to a CS term if $\psi \equiv 0$.

Finally, it would be interesting to investigate whether stationary, but non-static, asymptotically flat, bubbling  spacetimes exist in higher dimensional Einstein-Maxwell theory. The only known examples are in supergravity which possesses a Chern-Simons term. However, there does not appear to be any reason they cannot occur in pure Einstein-Maxwell theory. As the general mass formula (\ref{ADM}) shows, they would also have to be supported by a magnetic field.

\paragraph{Acknowledgements} HKK is supported by NSERC Discovery Grant 418537-2012. JL is supported by STFC [ST/L000458/1].  HKK thanks the Higgs Centre for Theoretical Physics and the School of Mathematics at the University of Edinburgh for their hospitality and support during the initiation of this work.

\end{document}